\theoremstyle{definition}
\newtheorem{pr}{Proposition}[section]
\newtheorem{defn}{Definition}[section]
\newtheorem{rem}{Remark}
\newcommand{\bib}[2]{\frac{\partial {#1}}{\partial {#2}}}
\newcommand{\bbib}[3]{\frac{\partial^2 {#1}}{\partial {#2}{\partial {#3}}}}
\begin{document}
\title{
Finsler connection in moving frame formalism
}
\author{Takayoshi Ootsuka}
\email{ootsuka@cosmos.phys.ocha.ac.jp}
\affiliation{Department of Physics, 
Ochanomizu University, 
2-1-1 Ootsuka Bunkyo-ku, Tokyo, Japan }
\author{Muneyuki Ishida}
\email{ishida@phys.meisei-u.ac.jp}
\affiliation{Department of Physics, Meisei University,
2-1-1 Hodokubo, Hino, Tokyo, Japan }
\author{Ryoko Yahagi}
\email{yahagi@rs.tus.ac.jp}
\affiliation{Department of Physics,
Faculty of Science, 
Tokyo University of Science,
1-3 Kagurazaka, Shinjuku-ku, Tokyo, Japan }

\date{\today}

\begin{abstract}

In our previous work, we have defined a nonlinear connection of Finsler manifold
which preserves the Finsler metric $L=L(x,dx)$.
To make 
the method easier and more useful in applications,
moving frame (vielbein) $\theta^a={e^a}_\mu dx^\mu$
formalism for the nonlinear connection is newly considered.
We derive 
formulae 
to calculate the Finsler connection in the
specific case that the Finsler metric depends not on coordinates 
$x^\mu$, but only on 
moving frame $\theta^a$: $L=L(\theta)$.

\end{abstract}

\maketitle

\section{Introduction}

Lagrangian systems are naturally described with Finsler geometry~\cite{Lanczos, Suzuki}
in the same sense that Hamiltonian systems are described with 
symplectic geometry or contact geometry
we can construct an $(n+1)$-dimensional Finsler manifold $(M,L)$
from a Lagrangian system $(Q,L^\ast)$, a pair of 
an $n$-dimensional configuration space $Q$
and a Lagrangian of the system 
$L^\ast=L^\ast\bigl(q^i,\dot{q}^i,t\bigr), \, (i=1,2,\dots,n)$.
\begin{align}
 M=\mathbb{R}\times Q, \quad
 L=L(x,dx):=L^\ast\left(x^i, \frac{dx^i}{dx^0},x^0\right)|dx^0|.
\end{align}
$M$ is called {\it the extended configuration space} in analytical dynamical terminology.
$L$ becomes a Finsler metric,
which is a function of coordinates $(x^\mu)$ and their derivative
(adopted fiber coordinates of $TM$) $(dx^\mu)$
and is homogeneity one with respect to $(dx^\mu)$,
\begin{align}
 L(x,\lambda dx)=\lambda L(x,dx), \quad \lambda >0. \label{homogeneity}
\end{align} 

We do not assume positivity: $L>0$ and regularity
: ${\rm det}\bigl(g_{\mu\nu}(x,dx)\bigr)\neq 0$,
$\displaystyle g_{\mu\nu}(x,dx)=\frac12\bbib{L^2}{dx^\mu}{dx^\nu}$
as in standard textbooks
of Finsler geometry~\cite{Matsumoto, Miron, Chern, Antonelli}.
This is because these restrictions can be burdens of physical applications.
We only assume the positive homogeneity condition
(\ref{homogeneity}).

In our previous work~\cite{KO}, 
we give a definition of a generalized Berwald's nonlinear connection
on Finsler manifold $(M,L)$ in point Finsler perspective~\cite{Kozma}.
In this standpoint, Finsler geometry is considered to be a geometry
not on the line element space (slit tangent bundle $TM^{\circ}:=TM-\{0\}$), but 
on the point space $M$,
in contrast to the conventional perspective where
Finsler connection is defined as a linear connection on $TM^{\circ}$.
For physical applications,
we believe only nonlinear parts of the Finsler connection is needed;
linear parts do not play any role in physics.
Our nonlinear Finsler connection which defines parallel transports
on a point space $M$ can be a natural extension of 
the Levi-Civita connection on the point space $M$
as an event manifold
which Einstein used as a tool of his general relativity.
If you are persistent in constructing a deformed gravity with such a 
conventional Finsler connection,
you need to clarify its physical meaning to the coordinates $(y^\mu)$
of $(x,y) \in TM^\circ$ on which you consider a linear connection.
On the other hand, our nonlinear Finsler connection has been used 
in constrained dynamics~\cite{KO}, 
fluid mechanics~\cite{OTIY}, Killing vectors on Finsler manifold~\cite{OITY}
and superparticles~\cite{OITYsuper}, 
proving itself worthwhile in physical applications. 
Explicit calculation of this nonlinear connection, however, 
tend to be complicated. 
Even in the case of a Finsler metric that is a natural extension of 
Riemannian metric 
(e.g. the super Finsler metric defined from 
a superparticle Lagrangian~\cite{OITY}), 
it is hard to say easy. 
The purpose of this paper is to make the calculation easier 
by using vielbeins, 
following general relativity where the vielbein technique plays a critical role 
in simplification of the theoretical structure as well as calculations. 
In contrast to the vielbein $\theta^a$ in a Riemannian manifold, 
which is chosen to span an orthonormal basis, 
we do not impose orthonormality, since there is no concept of orthogonality in Finsler manifold in general.
A vielbein $\theta^a$ in this paper is just an independent 1-form basis, 
and the Finsler metrics considered here depend on only such a vielbein 
and have no coordinate function $(x^\mu)$ dependence:
$L=L(\theta^a)$.
Similar to the Einstein-Cartan theory of gravity, it is sufficient to take the Finsler metrics
of the above type when Finsler gravity theory is considered.

In the next section, 
we give a short review of our nonlinear Finsler connection.
In sections 3, 4 and 5, we show the description of the nonlinear connection 
in moving frame formalism.
In section 6, the actual calculations of the connections in two examples,
Riemannian manifold and pseudo particle model, are given.

\section{nonlinear Finsler connection}

Let $(M,L)$ be an $(n+1)$-dimensional Finsler manifold. 
Usually a nonlinear connection of $TM$ is defined as a subbundle  
$HTM$ of $TTM$ such as 
\begin{align}
 TTM=HTM \oplus VTM,
\end{align}
where $VTM$ is the vertical vector bundle over $TM$
\cite{Miron, Antonelli}.
With the adopted coordinates $(x^\mu,dx^\mu)$ of $TM$, 
the horizontal bundle $HTM$ is generated by vector fields 
$\displaystyle \frac{\delta}{\delta x^\rho}$;
\begin{align}
 HTM=\left\langle \frac{\delta}{\delta x^\rho}
 :=\bib{}{x^\rho}-{N^\mu}_\rho(x,dx) \bib{}{dx^\mu}
 \right\rangle,
\end{align}
where ${N^\mu}_\rho(x,dx)$ are functions of $x^\mu$ and $dx^\mu$,
and homogeneity one with respect to $dx^\mu$:
${N^\mu}_\rho(x,\lambda dx)=\lambda {N^\mu}_\rho(x,dx), \, \lambda >0$.

Using these coefficitents ${N^\mu}_\rho(x,dx)$,
we can define a nonlinear covariant derivative for vector fields on $M$.
For $X, Y \in \Gamma(TM)$, the covariant derivative $\nabla_X Y$
is defined by
\begin{align}
 \nabla_X Y:= X^\rho \left\{\bib{Y^\mu}{x^\rho} +{N^\mu}_\rho(x,dx(Y))
 \right\} \otimes \bib{}{x^\mu} \in \Gamma(TM).\label{nablaXY}
\end{align}
This nonlinear connection ${N^\mu}_\rho(x,dx)$ actually defines
parallel transports on $M$.
Other linear parts of a linear connection of $TM$ called $N$-linear connection
do not have any use in physics.
Therefore it is convenient to define the nonlinear connection,
with tangent-bundle-terminology excluded.
Furthermore, we can choose a nonlinear connection which preserves Finsler metric:
$\nabla L=0$, and we call it 
nonlinear Finsler connection. 
We first define a nonlinear connection on $\Gamma(T^\ast M)$,
and from duality, we define 
the
covariant derivative on $\Gamma(TM)$ 
as displayed in (\ref{nablaXY}).   

\begin{defn}
We call $\nabla$ a nonlinear Finsler connection on $M$
which satisfies following conditions.
\begin{align}
 &\nabla dx^\mu=- dx^\rho \otimes {N^\mu}_\rho(x,dx), \\
 &\bib{{N^\mu}_\rho}{dx^\nu}-\bib{{N^\mu}_\nu}{dx^\rho}=0, \label{Cartan} \\
 &\bib{L}{x^\rho}-\bib{L}{dx^\mu}{N^\mu}_\rho=0.\label{metricity}
\end{align}
\end{defn}

If the Hessian of the Finsler metric $L$ with respect to $dx$,
$\displaystyle L_{\mu\nu}:=\bbib{L}{dx^\mu}{dx^\nu}$,
satisfies ${\rm rank}(L_{\mu\nu})=n$, 
the Finsler metric is called {\it regular}, and if
${\rm rank}(L_{\mu\nu}) < n$, it is called {\it singular}.
Our nonlinear connection can be defined in the case of 
singular Finsler metrics
as well as regular ones.

We can prove the following existence theorem of such a nonlinear connection~\cite{KO}.
\begin{pr}
 If
 ${\rm rank}\bigl(L_{\mu\nu}\bigr)=n-D$, take coordinates as
 $(x^\mu)=(x^0,x^a,x^I), \, (a=1,2,\dots,n-D), \, (I=n-D+1,n-D+2, \dots ,n)$ such that
 ${\rm det}\bigl(L_{ab}\bigr)\neq 0$.
Then, nonlinear connection coefficients ${N^\mu}_\rho$
 are given by the derivatives of the functions $G^\mu, \, (\mu=0,a,I)$ defined below 
with respect to $dx^\rho$:
$\displaystyle{N^\mu}_\rho=\bib{G^\mu}{dx^\rho}$,
\begin{align}
 G^\mu:=\frac12 \left(dx^\rho \bib{L}{x^\rho}\right)\frac{dx^\mu}{L}
 +L^{ab}M_b L\bib{}{dx^a}\left(\frac{dx^\mu}{L}\right)
 +\lambda^I v^\mu_I, 
 \quad 
 M_I-L^{ab}L_{aI}M_b=0,
\end{align}
where the matrices $\bigl(L^{ab}\bigr)$ are 
the inverse of the matrices $(L_{ab})$, and $M_\mu, \, (\mu=a,I)$ are defined by
\begin{align}
  M_\mu:=\frac12 \left(-\bib{L}{x^\mu}+dx^\rho \bbib{L}{dx^\mu}{x^\rho}\right),  
\end{align}
and $\lambda^I$ are arbtrary homogeneity-two functions
and 
$v^\mu_I$ are such 0-eigenvalued homogeneity-zero functions 
that 
$L_{\mu\nu} \, v^\nu_I=0$.
\end{pr}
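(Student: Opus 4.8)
The plan is to take the displayed $G^\mu$ as a \emph{definition}, to set ${N^\mu}_\rho:=\bib{G^\mu}{dx^\rho}$, and to check that this ${N^\mu}_\rho$ meets the three requirements in the Definition together with positive homogeneity degree one in $(dx^\mu)$. Homogeneity degree two of $G^\mu$ — hence degree one of ${N^\mu}_\rho$ — is read off term by term: $dx^\rho\,\bib{L}{x^\rho}$ is of degree two and $dx^\mu/L$ of degree zero; $\bigl(L^{ab}\bigr)$ is of degree $+1$, $M_b$ of degree one, $L$ of degree one and $\bib{}{dx^a}\!\bigl(dx^\mu/L\bigr)$ of degree $-1$; and $\lambda^I$ is of degree two while $v^\mu_I$ is of degree zero. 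The symmetry relation (\ref{Cartan}) is automatic, since $\bib{{N^\mu}_\rho}{dx^\nu}=\bbib{G^\mu}{dx^\nu}{dx^\rho}$ is symmetric in $\nu,\rho$. So the whole content is the metricity relation (\ref{metricity}), plus the claim — implicit in ``$G^\mu$ defined below'' — that the construction is consistent, i.e.\ that the stated relation $M_I-L^{ab}L_{aI}M_b=0$ can be met.

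First I would reduce (\ref{metricity}) to two purely algebraic identities for $G^\mu$. Writing $L_\mu:=\bib{L}{dx^\mu}$ and $S:=dx^\rho\,\bib{L}{x^\rho}$, one has $\bib{S}{dx^\mu}=\bib{L}{x^\mu}+dx^\rho\,\bbib{L}{dx^\mu}{x^\rho}$, hence $2M_\mu=\bib{S}{dx^\mu}-2\,\bib{L}{x^\mu}$, equivalently $\tfrac12\,\bib{S}{dx^\rho}-M_\rho=\bib{L}{x^\rho}$. Using this together with $\bib{L_\mu}{dx^\rho}=L_{\mu\rho}$, the Leibniz rule gives
\[
 L_\mu\,{N^\mu}_\rho\;=\;L_\mu\,\bib{G^\mu}{dx^\rho}\;=\;\bib{}{dx^\rho}\!\bigl(L_\mu G^\mu\bigr)\;-\;L_{\mu\rho}G^\mu .
\]
Therefore (\ref{metricity}) is equivalent, modulo homogeneity, to the pair (A)~$L_\mu G^\mu=\tfrac12 S$ and (B)~$L_{\mu\rho}G^\mu=M_\rho$ for \emph{every} $\rho\in\{0,a,I\}$; the converse direction follows by contracting (\ref{metricity}) with $dx^\rho$ (which gives (A) through $dx^\rho\bib{G^\mu}{dx^\rho}=2G^\mu$) and by applying $\bib{}{dx^\nu}$ and re-contracting with $dx^\rho$ (which gives (B)).

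Then I would verify (A) and (B) by substituting the explicit $G^\mu$, using: the Euler identities $dx^\mu L_\mu=L$, $dx^\mu L_{\mu\nu}=0$ and — a consequence of the latter two and the definition of $M$ — $dx^\mu M_\mu=0$; the inverse relation $L^{ab}L_{bc}=\delta^a_c$; and the null-direction facts $L_{\mu\nu}v^\nu_I=0$ and $L_\mu v^\mu_I=0$, the second of which forces the $v^\mu_I$ to be taken inside $\ker(L_{\mu\nu})\cap\ker(L_\mu)$ — a subspace of dimension exactly $D$ and complementary to $\mathbb{R}dx^\mu$ in $\ker(L_{\mu\nu})$, because $L_\mu dx^\mu=L\neq0$ while $dx^\mu\in\ker(L_{\mu\nu})$. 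Granting these, (A) is immediate: the middle term of $G^\mu$ drops, since $L_\mu\,\bib{}{dx^a}\!\bigl(dx^\mu/L\bigr)=\bib{}{dx^a}\!\bigl(L_\mu dx^\mu/L\bigr)-L_{a\mu}dx^\mu/L=0$, and the $\lambda^I v^\mu_I$ term drops. For (B), the $dx^\mu/L$-term and the $\lambda^I v^\mu_I$ term are killed by $dx^\mu L_{\mu\rho}=0$ and $L_{\mu\rho}v^\mu_I=0$, leaving $L_{\mu\rho}G^\mu=L^{ab}M_b\,L_{a\rho}$; this equals $M_\rho$ for $\rho=c$ by the inverse relation, for $\rho=I$ exactly by the relation $M_I-L^{ab}L_{aI}M_b=0$, and for $\rho=0$ by combining that relation with the rank-forced identity $L_{0a}=-\bigl(dx^b L_{ba}+dx^I L_{Ia}\bigr)/dx^0$ (from $dx^\mu L_{\mu a}=0$) and with $dx^\mu M_\mu=0$.

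Finally I would treat the consistency relation $M_I=L^{ab}L_{aI}M_b$ itself. It is precisely the Fredholm solvability condition for the system (B): as $L_{\mu\nu}$ is symmetric of rank $n-D$ with kernel spanned by $dx^\mu$ and the $v^\mu_I$, the image of $L_{\mu\nu}$ is the annihilator of that kernel, and — using that the $n-D$ covectors $\mu\mapsto L_{\mu a}$ are independent (by $\det(L_{ab})\neq0$) and hence span the image, whence $L_{\mu I}=L^{ab}L_{bI}L_{\mu a}$ — a solution of $L_{\mu a}G^\mu=M_a$ also solves $L_{\mu I}G^\mu=M_I$ iff $M_I=L^{ab}L_{bI}M_a$; together with $dx^\mu M_\mu=0$ (which always holds) this is the stated relation, and equivalently it says that $M_\nu$ annihilates $\ker(L_{\mu\nu})$, i.e.\ $v^\mu_I M_\mu=0$. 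When it holds, the $D$ arbitrary functions $\lambda^I$ simply parametrize the $D$-dimensional solution space of (B), dropping out of both (A) and (B), so the displayed $G^\mu$ is the general solution; for regular $L$ ($D=0$) there are no indices $I$, the relation is vacuous, and $G^\mu$ is unique. I expect the two points that take real work to be (i) this solvability step for singular metrics — establishing that $M_I=L^{ab}L_{aI}M_b$ does hold, which (unlike $dx^\mu M_\mu=0$) need not be automatic in general in the singular case and must be checked within the class of Finsler metrics actually under consideration — and (ii) keeping the $(0,a,I)$ splitting straight, especially reaching the $\rho=0$ row of (B), which lies outside the invertible $(ab)$-block and has to be handled through the Euler identities $dx^\mu L_{\mu a}=0$ and $dx^\mu M_\mu=0$.
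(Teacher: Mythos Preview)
Your argument is correct. Note, though, that the paper does not actually prove this proposition here---it is quoted from~\cite{KO}---so the closest in-paper comparison is the proof of the moving-frame analog, Proposition~\ref{formula}. That proof is \emph{constructive}: it expands $G^a=\lambda^0\,\theta^a/L+\lambda^{\bm b}\ell_{\bm b}{}^a+\lambda^I v_I{}^a$ and then uses the metricity condition $L_aN^a{}_c=0$ to solve for the coefficients, obtaining $\lambda^0=0$, an explicit $\lambda^{\bm b}$, and the constraints $\mathcal C_I=0$ as the compatibility conditions of an overdetermined linear system. Your route is the dual, \emph{verificational} one---reduce metricity to (A)~$L_\mu G^\mu=\tfrac12 S$ and (B)~$L_{\mu\rho}G^\mu=M_\rho$, then check them for the given $G^\mu$---but the underlying ingredients (Euler identities, invertibility of $(L_{ab})$, kernel structure of $(L_{\mu\nu})$) are the same, and your reading of $\mathcal C_I=0$ as the Fredholm solvability condition for (B) is precisely the mechanism by which the paper \emph{derives} the constraints in its setting. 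Your explicit treatment of the $\rho=0$ row via $dx^\mu L_{\mu a}=0$ and $dx^\mu M_\mu=0$ is a genuine addition relative to the moving-frame proof, where $S\equiv0$ and the corresponding row is absorbed differently.

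One correction to your closing remark~(i): the relation $\mathcal C_I=0$ is not something the proposition asks you to \emph{establish} as an identity on $TM^\circ$; it is part of the output, on the same footing as the formula for $G^\mu$. As the Remark immediately following the proposition makes explicit, these are constraints in the Dirac sense---they cut out the admissible phase space and, through consistency under evolution, may fix some of the $\lambda^I$. Your Fredholm interpretation already captures this: where $\mathcal C_I\ne0$ no $G^\mu$ can satisfy (B), and the proposition simply records that obstruction rather than claiming it vanishes. So no further work is needed there. Your observation that one must impose $L_\mu v^\mu_I=0$ is correct and matches what the paper states explicitly in the moving-frame version, equation~(\ref{l_I^a}); the holonomic statement is slightly elliptic on this point.
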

\begin{rem}
 There are $D$ constraints,
 ${\cal C}_I:= M_I-L^{ab}L_{aI}M_b=0$,
 which determine some of the {\it Lagrange multipliers} $\lambda^I$. 
 If all of $\lambda^I$ are determined, we call the systems
 {\it second class constrained systems}.
 Otherwise, ${\cal C}_I$
 include some {\it first class constraints}.
\end{rem}
\begin{rem}
 The regularity condition of Finsler metric $L$, 
 ${\rm rank}\bigl(L_{\mu\nu}\bigr)=n$,
 corresponds to the famous condition of nondegeneracy of the 
 fundamental tensor: 
 $\displaystyle {\rm det}\bigl(g_{\mu\mu}\bigr)\neq 0, \, g_{\mu\nu}(x,dx)
 :=\frac12 \bbib{L^2}{dx^\mu}{dx^\nu}$.
 Then our nonlinear connection ${N^\mu}_\rho(x,dx)$
 becomes the nonlinear parts of the Berwald's 
 connection. 
\end{rem}

The covariant derivative of a vector field $Y \in \Gamma(TM)$
is defined from the derivative of $Y^\mu=\langle dx^\mu, Y\rangle$,
\begin{align}
 dY^\mu=\langle \nabla dx^\mu, Y\rangle+\langle dx^\mu, \nabla Y\rangle
 =-{N^\mu}(x,dx)(Y)+\langle dx^\mu, \nabla Y\rangle,
\end{align} 
where the contraction of ${N^\mu}_\rho(x,dx)$ and $Y$
is defined as ${N^\mu}_\rho\bigl(x,dx(Y)\bigr)$.
Therefore
\begin{align}
 \nabla Y=dY^\mu +{N^\mu}\bigl(x,dx(Y)\bigr), \quad 
 {N^\mu}\bigl(x,dx(Y)\bigr):=dx^\rho \otimes {N^\mu}_\rho\bigl(x,dx(Y)\bigr),
\end{align}
and $\nabla_X Y$ becomes (\ref{nablaXY}).

We will call the condition (\ref{Cartan}) {\it Cartan condition},
which coresponds to the torsionless condition in the case of 
${N^\mu}_\rho(x,dx)$ are linear
with respect to $(dx)$.
But in general, 
this Cartan condition does not mean torsionless because of the nonlinearity of the connection; 
\begin{align}
 \nabla_{\bib{}{x^\rho}} \bib{}{x^\nu}-\nabla_{\bib{}{x^\nu}} \bib{}{x^\rho}
 =\bib{{N^\mu}_\rho}{dx^\nu}
 \left(x,dx\left(\bib{}{x^\rho}\right)\right)
 -\bib{{N^\mu}_\nu}{dx^\rho}\left(x,dx\left(\bib{}{x^\rho}\right)\right) \neq 0.
\end{align}
{\it Metricity condition}
(\ref{metricity}) shows the preservation of $L$ by $\nabla$;
\begin{align}
 \mbox{``$\displaystyle
 \nabla L=\nabla x^\rho \bib{L}{x^\rho}+\nabla dx^\mu \bib{L}{dx^\mu}$''}
 =dx^\rho \otimes \left(\bib{L}{x^\rho}-{N^\mu}_\rho \bib{L}{dx^\mu} \right)=0.
\end{align}
Here we think that $\nabla$ is a covariant derivative on $M$ not on $TM$,
and we also think $dx^\mu$ as 1-forms on $M$ not as fiber coordinates of $TM$. 
$L$ is not a 1-form but what we call a {\it nonlinear 1-from }.
Therefore $\nabla L$ is calculated through an extention in action 
of $\nabla$ to nonlinear 1-froms.

Furthermore using this nonlinear connection (Berwald function),
we can write the Euler-Lagrange equation in an auto-parallel form~\cite{KO}:
\begin{align}
 0={\cal EL}_\mu(L):=\bib{L}{x^\mu}-d\left(\bib{L}{dx^\mu}\right)
 ~\Leftrightarrow~
 \left\{
 \begin{array}{l}
  d^2 x^\mu+2G^\mu(x,dx)=\lambda dx^\mu, \\
  {\cal C}_I=0
 \end{array}
 \right.
\end{align}

As we mentioned in the introduction, we use only nonlinear parts of the
so-called Finsler connection.
This makes the calculations of Finsler connection, 
which is normally thought to be very difficult, easier.
In fact, the explicit derivation of nonlinear connection with the formula above
is much faster than the one with the conventional method.
What we explained so far is based on the holonomic coordinates.
We further assure that usage of moving frame 
(vielbein)
reduces several more calculation steps.
In the next and the following section, we rewrite the Cartan condition (\ref{Cartan}),
the metricity condition
(\ref{metricity})
, and the Berwald functions
in terms of moving frame.

\section{Cartan and Metricity conditions}

On Riemannian manifold $(M,g)$, we frequently assume 
the torsionless condition: $T=0$, and
the metricity condition: $\nabla g=0$.
As is well known, using moving 
frame (vielbein) $\bigl(\theta^a\bigr)$,
the torsionless condition, 
\begin{align}
 0=T\left(\bib{}{x^\rho},\bib{}{x^\nu}\right)
 =\nabla_{\bib{}{x^\rho}} \bib{}{x^\nu} 
 -\nabla_{\bib{}{x^\nu}} \bib{}{x^\rho}
 =\left(
 {\varGamma^\mu}_{\nu\rho}-{\varGamma^\mu}_{\rho\nu}
 \right)\bib{}{x^\mu},
\end{align}
is equivalent to the condition,
\begin{align}
 d\theta^a + {\omega^a}_b \wedge \theta^b=0,
\end{align}
where ${\omega^a}_b$ are spin connections defined by
$\nabla \theta^a=-{\omega^a}_b \theta^b$.
If the Riemannian metric is written as
$g=\mu_{ab}\theta^a \otimes \theta^b$ with constant coefficients $\mu_{ab}$,
the metricity condition $\nabla g=0$ corresponds to the 
following condition,
\begin{align}
 \omega_{ab}+\omega_{ba}=0, \quad \omega_{ab}:=\mu_{ac}{\omega^c}_b.
\end{align}

Similarly,
we derive the corresponding conditions 
for the nonlinear connections on a Finsler manifold.
First, we should replace the Cartan's condition (\ref{Cartan}),
which corresponds to the torsionless condition on a Riemannian manifold,
in terms of moving frames.
${N^\mu}_\rho$ are written as
\begin{align}
 {N^\mu}_\rho=-\nabla_{\rho}dx^\mu=-\nabla_{\rho}\bigl(\theta^a {e_a}^\mu\bigr)
 =-\theta^a \bib{{e_a}^\mu}{x^\rho}+{N^a}_\rho {e_a}^\mu
 =-\theta^a \bib{{e_a}^\mu}{x^\rho}+{N^a}_c {e_a}^\mu {e^c}_\rho,
\end{align}
where ${N^a}_\rho:=-\nabla_\rho \theta^a$ and ${N^a}_c:={N^a}_\rho {e_c}^\rho$,
and which give
\begin{align}
 \bib{{N^\mu}_\rho}{dx^\nu}
 =\bib{\theta^b}{dx^\nu}\bib{{N^\mu}_\rho}{\theta^b}
 ={e^b}_\nu \bib{{N^\mu}_\rho}{\theta^b}
 =-{e^a}_\nu \bib{{e_a}^\mu}{x^\rho}
 +{e^b}_\nu \bib{{N^a}_c}{\theta^b}{e_a}^\mu {e^c}_\rho.
\end{align}
Therefore the Cartan condition (\ref{Cartan}) becomes
\begin{align}
 {e^a}_\rho \bib{{e_a}^\mu}{x^\nu}-{e^a}_\nu \bib{{e_a}^\mu}{x^\rho}
 +{e^b}_\nu \bib{{N^a}_c}{\theta^b}{e_a}^\mu {e^c}_\rho
 -{e^b}_\rho \bib{{N^a}_c}{\theta^b}{e_a}^\mu {e^c}_\nu=0.
\end{align}
Multiplying this by $e^d{}_\mu$,
\begin{align}
 -\bib{{e^d}_\rho}{x^\nu}+\bib{{e^d}_\nu}{x^\rho}
 +\bib{{N^d}_c}{\theta^b}{e^b}_\nu {e^c}_\rho
 -\bib{{N^d}_c}{\theta^b}{e^b}_\rho {e^c}_\nu=0,
\end{align}
therefore we get
\begin{align}
 \bib{{N^a}_c}{\theta^b}-\bib{{N^a}_b}{\theta^c}
 ={e_b}^\nu {e_c}^\rho \left(\bib{{e^a}_\rho}{x^\nu}
 -\bib{{e^a}_\nu}{x^\rho}\right)
 =-{e^a}_\rho e_b\bigl({e_c}^\rho\bigr)+{e^a}_\rho e_c\bigl({e_b}^\rho\bigr).
 \label{Cartan1}
\end{align}
Because the RHS of (\ref{Cartan1}) can be written as
$-\theta^a\bigl([e_b,e_c]\bigr)$,
the Cartan condition also becomes
\begin{align}
 \bib{{N^a}_c}{\theta^b}-\bib{{N^a}_b}{\theta^c}
 +\theta^a\bigl([e_b,e_c]\bigr)=0. \label{Cartan2}
\end{align}

\begin{rem}
 If we define a ``nonlinear spin connection 1-form''
 $\displaystyle{\tilde{\omega}^a}{}_b:=\theta^c \otimes \bib{{N^a}_c}{\theta^b}$,
 then the above Cartan condition is written like the case of Riemannian,
\begin{align}
 {\rm rot \,} \theta^a+{\tilde{\omega}^a}{}_b \wedge \theta^b
 :={\rm rot \,} \theta^a \otimes 1
 +\theta^c \wedge \theta^b \otimes \bib{{N^a}_c}{\theta^b}=0,
\end{align}
where ${\rm rot}$ means the exterior differentiation of 1-forms which becomes 2-forms.
\end{rem}

Second, we will think the metricity condition: $\nabla L=0$.
If we assume the Finsler metric $L$ is written 
only by moving frame 
$\bigl( \theta^a\bigr)$
as $L=L(\theta)$, 
then the metricity condition is rewritten by
\begin{align}
0 &= 
 \nabla L 
 = \nabla \theta^a \bib{L}{\theta^a}
 = -\theta^c \otimes {N^a}_c \bib{L}{\theta^a}, 
 \quad
 \mbox{i.e.} \quad {N^a}_c L_a=0, \quad
 L_a:= \bib{L}{\theta^a}.
 \label{metricity2}
\end{align} 
In the next section, 
using our conditions (\ref{Cartan2}) and (\ref{metricity2}),
we construct a formulae of the Berwald's functions
in moving frame formulation.

\section{Explicit Formula}

From a given Finsler metric $L=L(\theta)$,
we construct the coefficients ${N^a}_c$ of the nonlinear connection,  
which are the functions of $(x^\mu, \theta^a)$,
homogeneity one with respect to $(\theta^a)$,
and satisfy (\ref{Cartan2}) and (\ref{metricity2}).
We define auxiliary functions 
\begin{align}
 G^a := \frac{1}{2} \, \theta^c {N^a}_c,
\end{align}
which we also call Berwald functions.
Note that the product $\theta^c$ and ${N^a}_c$ is a simple multiplication,
not a tensor product.
These Berwald functions satisfy
\begin{align}
 \bib{G^a}{\theta^c}
 &=\frac12 {N^a}_c +\frac12 \bib{{N^a}_b}{\theta^c}\theta^b
 =\frac12 {N^a}_c+\frac12 \left\{
 \bib{{N^a}_c}{\theta^b} \theta^b+\theta^a\bigl([e_b,e_c]\bigr) \theta^b
 \right\} \nonumber \\
 &={N^a}_c + \frac12 \theta^a\bigl([e_b,e_c]\bigr) \theta^b,
\end{align}
where the Cartan condition (\ref{Cartan2}) 
and the first homogeneity of ${N^a}_c$ with respect to $\theta$
have been exploited.
Conversely, if ${N^a}_c$ are given by
\begin{align}
 {N^a}_c=\bib{G^a}{\theta^c}-\frac12 \theta^a\bigl([e_b,e_c]\bigr)\theta^b,\label{GtoN}
\end{align}
we can easily check that these ${N^a}_c$ satisfy the Cartan condition (\ref{Cartan2}). 

Using the Berwald functions, we solve the metricity condition (\ref{metricity2})
with the following assumption of the Finsler metric:
\begin{align}
 L=L(\theta), \quad
 {\rm rank}\bigl(L_{ab}\bigr)=n-D, \quad 
 {\rm det}\bigl(L_{\bm{a}\bm{b}}\bigr)\neq 0, \quad
 \bigl(L_{ab}\bigr):=
 \left(\bbib{L}{\theta^a}{\theta^b}\right),
\end{align}
where 
$a,b=0,1,2,\dots, n$, 
$\bm{a},\bm{b}=1,2,\dots,n-D$ and 
$D$ is a constant for $0 \leq D \leq n$.
Under these assumptions of $L$,
there are $n+1$ independent vectors 
$\displaystyle \left(\frac{\theta^a}{L},{\ell_{\bm{b}}}^a, {v_I}^a \right)$
such that
\begin{align}
 {\ell_{\bm{b}}}^a:=L\bib{}{\theta^{\bm{b}}} \left( \frac{\theta^a}{L} \right)
 =\delta^a_{\bm{b}}-\frac{\theta^a}{L} L_{\bm{b}},
\end{align}
and $D$ vectors ${v_I}^a$, $I=n-D+1,n-D+2,\dots,n$ are defined as eigenvectors
for zero eigenvalue of $\bigl(L_{ab}\bigr)$
and are orthogonal to $L_a$: 
\begin{align}
 L_{ab}\, {v_I}^b=0, \quad L_a\, {v_I}^a=0.
 \label{l_I^a}
\end{align}
\begin{rem}\label{rem-l}
 These basis vectors have the following properties,
\begin{align}
 L_a \, \frac{\theta^a}{L}=1, \quad 
 L_a \, {\ell_{\bm{b}}}^a=0, \quad
 L_a \, {v_I}^a=0, \quad
 L_a \, \bib{{v_I}^a}{\theta^b}=0.
\end{align}
The first three equations are obvious.
The fourth is derived from (\ref{l_I^a}) and the differentiation of the third equation
with respect to $\theta^b$.
\end{rem}

\begin{pr}
$\displaystyle \left(
\frac{\theta^a}{L},{\ell_{\bm{b}}}^a, {v_I}^a
\right)
$ 
are functionally independent. 
\end{pr}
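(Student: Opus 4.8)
The plan is to show that the $(n+1)\times(n+1)$ matrix whose rows are the $n+1$ vectors $\theta^a/L$, ${\ell_{\bm b}}^a$ (${\bm b}=1,\dots,n-D$) and ${v_I}^a$ ($I=n-D+1,\dots,n$) is nonsingular at each point of the domain where $L\neq 0$; since these vectors live in the $(n+1)$-dimensional space indexed by $a=0,\dots,n$, this is precisely their functional independence. Concretely I would assume a pointwise linear relation
\begin{align}
 \alpha\,\frac{\theta^a}{L}+\beta^{\bm b}\,{\ell_{\bm b}}^a+\gamma^I\,{v_I}^a=0
\end{align}
with scalar coefficients $\alpha,\beta^{\bm b},\gamma^I$, and deduce that they all vanish. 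The only inputs are the contraction identities of Remark~\ref{rem-l}, the eigenvector conditions (\ref{l_I^a}), the homogeneity relations $L_a\theta^a=L$ and $L_{ab}\theta^b=0$ (the latter because (\ref{homogeneity}) makes $L_a$ homogeneous of degree zero in $\theta$), and the nondegeneracy ${\rm det}\bigl(L_{\bm a\bm b}\bigr)\neq 0$.

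First I would contract the relation with $L_a$. By Remark~\ref{rem-l}, $L_a\,{\ell_{\bm b}}^a=0$ and $L_a\,{v_I}^a=0$, while $L_a(\theta^a/L)=1$, so the relation collapses to $\alpha=0$.

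With $\alpha=0$ the relation reads $\beta^{\bm b}{\ell_{\bm b}}^a+\gamma^I{v_I}^a=0$; I would then contract it with the Hessian $L_{ca}$. The eigenvector condition (\ref{l_I^a}) gives $L_{ca}{v_I}^a=0$, and using $L_{ca}\theta^a=0$,
\begin{align}
 L_{ca}\,{\ell_{\bm b}}^a=L_{ca}\Bigl(\delta^a_{\bm b}-\frac{\theta^a}{L}L_{\bm b}\Bigr)
 =L_{c{\bm b}}-\frac{L_{\bm b}}{L}\,L_{ca}\theta^a=L_{c{\bm b}}.
\end{align}
Hence $\beta^{\bm b}L_{c{\bm b}}=0$ for every $c$; restricting $c$ to the block ${\bm c}\in\{1,\dots,n-D\}$ gives $\beta^{\bm b}L_{{\bm c}{\bm b}}=0$, and invertibility of $\bigl(L_{\bm a\bm b}\bigr)$ forces $\beta^{\bm b}=0$. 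What survives is $\gamma^I{v_I}^a=0$, which gives $\gamma^I=0$ because the ${v_I}^a$ are linearly independent by construction.

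The only point that needs a word of care — and the one mild obstacle — is this last step, namely justifying that the chosen ${v_I}^a$ are genuinely linearly independent and that the three families do not overlap. This follows from a dimension count: $\ker\bigl(L_{ab}\bigr)$ has dimension $(n+1)-(n-D)=D+1$ and contains $\theta^a$ (since $L_{ab}\theta^b=0$) with $L_a\theta^a=L\neq 0$, so the subspace of $\ker\bigl(L_{ab}\bigr)$ annihilated by $L_a$ is exactly $D$-dimensional and the ${v_I}^a$ are by definition a basis of it. With this recorded, the three families are automatically in general position — $\theta^a/L$ is singled out by $L_a(\cdot)=1$, the ${v_I}^a$ by lying in $\ker\bigl(L_{ab}\bigr)\cap\{L_a=0\}$, and the ${\ell_{\bm b}}^a$ by $L_{c{\bm b}}=L_{ca}{\ell_{\bm b}}^a$ being a nonzero combination — and everything else above is a two-line contraction, valid pointwise wherever $L\neq 0$.
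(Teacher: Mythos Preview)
Your argument is correct and lands on the same conclusions as the paper's proof, but the key middle step differs. After killing $\alpha$ by contracting with $L_a$, the paper \emph{differentiates} the remaining relation with respect to $\theta^b$ and then contracts with $L_a$, invoking the identity $L_a\,\partial{v_I}^a/\partial\theta^b=0$ from Remark~\ref{rem-l} to drop the $\gamma^I$-terms and computing $L_a\,\partial{\ell_{\bm b}}^a/\partial\theta^b=-L_{{\bm b}b}$. You instead contract directly with the Hessian $L_{ca}$, using $L_{ca}\theta^a=0$ and $L_{ca}{v_I}^a=0$ to reach $\beta^{\bm b}L_{c{\bm b}}=0$ in one line. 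Your route is a bit more economical: it works pointwise without needing the relation to hold identically in $\theta$ (so no differentiation is required), and it avoids the fourth identity of Remark~\ref{rem-l}. The paper's route, on the other hand, makes that fourth identity do real work and thereby motivates its presence in the remark. Both approaches rely on the same two nontrivial inputs---invertibility of $(L_{\bm a\bm b})$ and the assumed linear independence of the ${v_I}^a$---and your final paragraph's dimension count is a welcome clarification of the latter that the paper leaves implicit.
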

\begin{proof}
Suppose the following linear combination of these vectors is zero:
\begin{align}
 A \, \frac{\theta^a}{L}
 +B^{\bm{b}} \, {\ell_{\bm{b}}}^a
 +C^{I} \, {v_I}^a
 =0. \label{ABC}
\end{align}
Contracting $L_a$ to the above equation, we have $A=0$, 
since $L_a \, {\ell_{\bm{b}}}^a=L_a \, {v_I}^a=0$.
Substituting $A=0$ into (\ref{ABC}), differentiating it with respect to $\theta^b$,
and contracting $L_a$ to it, we obtain
\begin{align}
 B^{\bm{b}}L_a \bib{{\ell_{\bm{b}}}^a}{\theta^b}+C^I L_a \bib{{v_I}^a}{\theta^b}=0,
\end{align}
where the second term of the lef hand side is zero because of (\ref{l_I^a}).
The rest becomes
\begin{align}
 B^{\bm{b}} L_a
 \bib{}{\theta^b}\left(\delta_{\bm{b}}^a-\frac{\theta^a L_{\bm{b}}}{L}\right)
 =B^{\bm{b}}L_a \left(-\frac{\delta^a_b L_{\bm{b}}}{L}
 +\frac{\theta^a L_{\bm{b}}L_b}{L^2}-\frac{\theta^a L_{\bm{b}b}}{L}\right)
 =-B^{\bm{b}}L_{\bm{b}b}=0.
\end{align}
Due to the assumption ${\rm det}\bigl(L_{\bm{a}\bm{b}}\bigr)\neq 0$,
the above equation reduces to $B^{\bm{b}}=0$.
Then we have $C^I {v_I}^a=0$ for independent ${v_I}^a$, 
which leads to $C^I=0$.
\end{proof}

\begin{pr} \label{formula}
 The Berwald functions $G^a$ and the constraints ${\cal C}_I$
 for the Finsler metric $L=L(\theta)$
 whose Hessian $L_{ab}$ has constant rank, ${\rm rank}\bigl(L_{ab}\bigr)=n-D$,
 are given by
\begin{align}
 &G^a=
 -\frac12 \left(\delta^a_{\bm{b}}-\frac{L_{\bm{b}}\theta^a}{L}\right)
 L^{\bm{b}\bm{c}}L_f\, \theta^f\bigl([e_g,e_{\bm{c}}]\bigr)\theta^g
 +\lambda^I {v_I}^a, \\
 &{\cal C}_I=L_f\theta^f\bigl([e_g,e_I]\bigr)\theta^g
 -L_{I\bm{b}}L^{\bm{b}\bm{c}}L_f\, \theta^f\bigl([e_g,e_{\bm{c}}]\bigr)\theta^g=0.
\end{align}
\end{pr}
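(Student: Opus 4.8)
The plan is to fix the connection by parametrising it through its Berwald function and then letting the metricity condition do the work. First I would use (\ref{GtoN}) to trade the unknowns ${N^a}_c$ for $G^a$, insisting that $G^a$ be homogeneity two in $\theta$; then ${N^a}_c$ is automatically homogeneity one (by differentiation, noting that the anholonomy coefficients $\theta^a([e_b,e_c])$ carry no fibre dependence), and, by Euler's identity $\theta^c\,\partial G^a/\partial\theta^c=2G^a$, it is consistent with $G^a=\tfrac12\theta^c{N^a}_c$, the extra term $\theta^a([e_b,e_c])\theta^b\theta^c$ dropping by antisymmetry in $b,c$; moreover the Cartan condition (\ref{Cartan2}) then holds automatically, as noted just below (\ref{GtoN}). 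All that remains is the metricity condition (\ref{metricity2}), ${N^a}_cL_a=0$. Since $\bigl(\theta^a/L,\ {\ell_{\bm{b}}}^a,\ {v_I}^a\bigr)$ is a pointwise basis by the preceding proposition, I would seek $G^a$ in the form $G^a=\beta^{\bm{b}}{\ell_{\bm{b}}}^a+\lambda^I{v_I}^a$ with $\beta^{\bm{b}}$ of homogeneity two; a possible $\theta^a/L$–component, say $\alpha\,\theta^a/L$, turns out to be unnecessary for the canonical connection (it would only add $\partial\alpha/\partial\theta^c$ to the metricity relation below, and we take $\alpha=0$), while the $\lambda^I$ remain as the undetermined Lagrange multipliers.

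The key computation is to contract $L_a$ into ${N^a}_c=\partial G^a/\partial\theta^c-\tfrac12\theta^a([e_b,e_c])\theta^b$ for this ansatz; the four identities of Remark~\ref{rem-l} do almost all of the work. The identities $L_a{\ell_{\bm{b}}}^a=L_a{v_I}^a=0$ annihilate the terms in which $L_a$ meets an undifferentiated basis vector; $L_a\,\partial{v_I}^a/\partial\theta^c=0$ removes the $v_I$–term entirely; and for the $\ell$–term one needs $L_a\,\partial{\ell_{\bm{b}}}^a/\partial\theta^c=-L_{\bm{b}c}$, which is exactly the identity computed inside the proof of the functional-independence proposition. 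The upshot is that metricity collapses to the single family of relations $\beta^{\bm{b}}L_{\bm{b}c}=-\tfrac12 L_f\,\theta^f([e_g,e_c])\theta^g$, one for each value of the index $c$.

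Finally I would split $c$ into its two ranges. For $c=\bm{c}\in\{1,\dots,n-D\}$ the matrix $L_{\bm{a}\bm{b}}$ is invertible by assumption, so these equations determine $\beta^{\bm{b}}=-\tfrac12 L^{\bm{b}\bm{c}}L_f\,\theta^f([e_g,e_{\bm{c}}])\theta^g$; inserting this into $G^a=\beta^{\bm{b}}{\ell_{\bm{b}}}^a+\lambda^I{v_I}^a$ and using ${\ell_{\bm{b}}}^a=\delta^a_{\bm{b}}-L_{\bm{b}}\theta^a/L$ reproduces exactly the stated formula for $G^a$. For $c=I$ the $\beta^{\bm{b}}$ are already fixed, so those equations cannot be solved for anything new; substituting the value of $\beta^{\bm{b}}$ turns them into the conditions ${\cal C}_I=L_f\,\theta^f([e_g,e_I])\theta^g-L_{I\bm{b}}L^{\bm{b}\bm{c}}L_f\,\theta^f([e_g,e_{\bm{c}}])\theta^g=0$, which are the constraints of the statement. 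A short homogeneity check — $L^{\bm{b}\bm{c}}$ is homogeneity $+1$, $L_f$ and the anholonomy coefficients $\theta^f([e_g,e_c])$ are homogeneity $0$, and $\theta^g$ is homogeneity $1$, so $\beta^{\bm{b}}$ is homogeneity $2$ — confirms the degree assumed for $G^a$.

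The obstacle is bookkeeping rather than anything conceptual: one must keep the $\bm{c}$–versus–$I$ split straight so that the $I$–components of the metricity equation are \emph{recorded as constraints} instead of being inverted, and one must handle carefully the index clash in $\theta^a([e_b,e_c])\theta^b$ — the leading $\theta^a$ a one-form evaluated on the bracket $[e_b,e_c]$, the trailing $\theta^b$ a fibre coordinate — when contracting with $L_a$. No genuinely new computation is needed, since the one nontrivial lemma, $L_a\,\partial{\ell_{\bm{b}}}^a/\partial\theta^c=-L_{\bm{b}c}$, has already been carried out in the preceding proof.
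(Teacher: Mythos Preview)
Your approach is essentially the paper's: expand $G^a$ in the basis $\bigl(\theta^a/L,\ {\ell_{\bm b}}^a,\ {v_I}^a\bigr)$, feed this into $L_a{N^a}_c=0$ via (\ref{GtoN}), and use the identities of Remark~\ref{rem-l} together with $L_a\,\partial{\ell_{\bm b}}^a/\partial\theta^c=-L_{{\bm b}c}$ to reduce metricity to the linear system $\beta^{\bm b}L_{{\bm b}c}=-\tfrac12 L_f\,\theta^f([e_g,e_c])\theta^g$.

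There is, however, one genuine gap. The free index $c$ in the metricity condition runs over $\{0\}\cup\{\bm c\}\cup\{I\}$, not just the two ranges you treat. You solve the $c=\bm c$ block for $\beta^{\bm b}$ and record the $c=I$ block as the constraints ${\cal C}_I$, but you never address $c=0$. The paper checks this case explicitly: using $\theta^c L_{c{\bm b}}=0$ (Euler's identity for the degree-zero functions $L_{\bm b}$) and the antisymmetry of $[e_g,e_c]$ in $g,c$ (so that $\theta^c\theta^g\,\theta^f([e_g,e_c])=0$), one sees that $\theta^c$ times the $c$-th metricity equation vanishes identically. Hence, once the $\bm c$-block is solved and the $I$-block is imposed as constraints, the $c=0$ equation follows automatically. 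Without this verification your argument is incomplete, since a priori the $c=0$ component could be an extra constraint or even an obstruction.

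A smaller point: your dismissal of the $\theta^a/L$-component (``we take $\alpha=0$'') is correct but undersold. The paper derives $\lambda^0=0$ from the contracted condition $L_aG^a=0$; equivalently, contracting your metricity relation with $\theta^c$ gives $2\alpha+0=0$ by the homogeneity of $\alpha$ and the vanishing of $\theta^cM_c$ just noted. So $\alpha=0$ is forced, not merely a convenient choice.
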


\begin{proof}
Contracting $\theta^c$ to (\ref{metricity2}), we get an equation for the
Berwald functions:
\begin{align}
 L_a G^a=0.\label{metricity33}
\end{align}
On the other hand, 
by using these basis vectors $\displaystyle \left(
\frac{\theta^a}{L}
,{\ell_{\bm{b}}}^a,{v_I}^a
\right)
$, 
$G^a$ can be expanded as
\begin{align}
 G^a = \lambda^0\ \frac{\theta^a}{L}
 +\lambda^{\bm b}{\ell_{\bm{b}}}^a+\lambda^I {v_I}^a.
 \label{Berwald0}
\end{align}
The requirement (\ref{metricity33}) becomes
\begin{align}
 L_a G^a
 =\lambda^0=0, \label{lambda^0}
\end{align}
due to the properties displayed in Remark \ref{rem-l}.
From (\ref{GtoN}) and (\ref{metricity2}),
\begin{align}
 0 &= L_a {N^a}_c
    = L_a \bib{G^a}{\theta^c}
      -\frac12 L_a \, \theta^a\bigl([e_b,e_c]\bigr) \theta^b \nonumber \\
   &=L_a {\ell_{\bm{b}}}^a\bib{\lambda^{\bm{b}}}{\theta^c} 
    +L_a {v_I}^a \bib{\lambda^I}{\theta^c}
    +\lambda^{\bm{b}} L_a\bib{{\ell_{\bm{b}}}^a}{\theta^c} 
    +\lambda^I L_a \bib{{v_I}^a}{\theta^c} 
    -\frac12 L_a \theta^a\bigl([e_b,e_c]\bigr) \theta^a. 
\end{align}
The first three terms vanish because of Remark \ref{rem-l}.
Then, we get
\begin{align}
 -\frac12 L_a \, \theta^a\bigl([e_b,e_c]\bigr)\theta^b
 =-\lambda^{\bm{b}} L_a 
   \bib{ {\ell_{\bm{b}}}^a}{\theta^c}
  = L_{c{\bm{b}}}\lambda^{\bm{b}}.
\end{align}
Thus we get the following equation,
\begin{align}
 L_{0\bm{b}}\lambda^{\bm{b}}=-\frac12 L_a \, \theta^a\bigl([e_b,e_0]\bigr)\theta^b,
 \quad
 L_{A\bm{b}}\lambda^{\bm{b}}=-\frac12 L_a \, \theta^a\bigl([e_b,e_{A}]\bigr)\theta^b,
 \quad
 L_{\bm{c}\bm{b}}\lambda^{\bm{b}}=-\frac12 L_a \, \theta^a\bigl([e_b,e_{\bm{c}}]\bigr)\theta^b.
 \label{Llambda}
\end{align}
For there is an inverse matrix $\bigl(L^{\bm{a}\bm{b}}\bigr)$
of $\bigl(L_{\bm{a}\bm{b}}\bigr)$;
$L^{\bm{a}\bm{c}} L_{\bm{c}\bm{b}}=\delta^{\bm{a}}_{\bm{b}}$,
we obtain a solution of the third equation of (\ref{Llambda})
\begin{align}
 \lambda^{\bm{b}} &=-\frac12 L ^{\bm{b}\bm{c}} L_a 
 \theta^a \bigl([e_b,e_{\bm{c}}]\bigr) \theta^b,
\end{align}
and constraints from the second equation of (\ref{Llambda}):
\begin{align}
 {\cal C}_I:=\frac12 L_a \, \theta^a\bigl([e_b,e_I]\bigr)\theta^b
 -\frac12 L_{I\bm{b}}L^{\bm{b}\bm{c}}L_a \, \theta^a\bigl([e_b,e_{\bm{c}}]\bigr)\theta^b=0.
 \label{const}
\end{align}
This $\lambda^{\bm{b}}$ with the constraints (\ref{const})
also satisfies the first equation of (\ref{Llambda}),
\begin{align}
 \theta^0 L_{0\bm{b}}\lambda^{\bm{b}}
 &=-\bigl(\theta^I L_{I\bm{b}}+\theta^{\bm{c}}L_{\bm{c}\bm{b}} \bigr)\lambda^{\bm{b}}
 =\frac12 \theta^I L_a \, \theta^a \bigl([e_b,e_I]\bigr)\theta^b
 +\frac12 \theta^{\bm{c}} L_a \, \theta^a \bigl([e_b,e_{\bm{c}}]\bigr)\theta^b
 \nonumber \\
 &=\frac12 \theta^c L_a \, \theta^a \bigl([e_b,e_c]\bigr)\theta^b
 -\frac12 \theta^0 L_a \, \theta^a \bigl([e_b,e_0]\bigr)\theta^b
 =-\frac12 \theta^0 L_a \, \theta^a \bigl([e_b,e_0]\bigr)\theta^b.
\end{align}
Therefore we get the Berwald functions,
\begin{align}
 G^a &
 = - \frac12\left( \delta_{\bm{b}}^a-\frac{L_{\bm{b}}\theta^a}{L}\right) 
   L^{\bm{b}\bm{c}} L_c \, \theta^c \bigl([e_b,e_{\bm{c}}] \bigr) \theta^b
  +\lambda^I {v_I}^a,
 \label{Berwald-formula}
\end{align}
with constraints (\ref{const}).
\end{proof}

Let us consider a simpler case where
the conjugate momenta of $\theta^I$ become $L_I=0$, and
the Hessian of the Finsler metric $L$ becomes
\begin{align}
 \bigl(L_{ab}\bigr)=\left(
 \begin{array}{ccc}
  L_{00}&L_{0\bm{b}} & 0 \\
  L_{\bm{a}0} & L_{\bm{a}\bm{b}} & 0 \\
  0 & 0 & O
 \end{array}
 \right)
 =\left(
 \begin{array}{cc}
  L_{\alpha\beta} & 0\\
  0  & O
 \end{array}
 \right),\quad
 \alpha=(0,\bm{a}), \, \beta=(0,\bm{b}),
 \label{HessianLsimple}
\end{align}
then
we get simpler formula for the Berwald functions.
We define a new matrix:
\begin{align}
 \bigl(\tilde{L}^{ab}\bigr)&:=\left(
  \begin{array}{ccc}
   \tilde{L}^{00} & \tilde{L}^{0\bm{b}} & 0 \\
   \tilde{L}^{\bm{a}0} & \tilde{L}^{\bm{a}\bm{b}} & 0 \\
   0 & 0 & O
  \end{array}
 \right)=\left(
 \begin{array}{cc}
  \tilde{L}^{\alpha\beta} & 0\\
  0  & O
 \end{array}
 \right),
 \\
 \bigl(\tilde{L}^{\alpha\beta}\bigr)&:=\left(
  \begin{array}{cc}
  \displaystyle 
   \left(\frac{\theta^0}{L}\right)^2 L^{\bm{a}\bm{b}}L_{\bm{a}}L_{\bm{b}} &
  \displaystyle 
   -\frac{\theta^0}{L}\left(L^{\bm{b}\bm{c}}L_{\bm{c}}
   -\frac{L^{\bm{a}\bm{c}}L_{\bm{a}}L_{\bm{c}}\theta^{\bm{b}}}{L}\right) \\
  \displaystyle  
   -\frac{\theta^0}{L}\left(L^{\bm{a}\bm{c}}L_{\bm{c}}
   -\frac{L^{\bm{b}\bm{c}}L_{\bm{b}}L_{\bm{c}}\theta^{\bm{a}}}{L}\right)  & 
  \displaystyle  
   L^{\bm{a}\bm{b}}-\frac{L^{\bm{a}\bm{c}}L_{\bm{c}}\theta^{\bm{b}}}{L}
   -\frac{L^{\bm{b}\bm{c}}L_{\bm{c}}\theta^{\bm{a}}}{L}
   +\frac{L^{\bm{c}\bm{d}}L_{\bm{c}}L_{\bm{d}}\theta^{\bm{a}} \theta^{\bm{b}}}{L^2}
  \end{array}
 \right),\label{Ltilde}
\end{align}
which satisfies the following identities
\begin{align}
 \tilde{L}^{\alpha\gamma}L_{\gamma\beta}
 =\delta^\alpha_\beta-\frac{\theta^\alpha L_\beta}{L}, \quad
 \tilde{L}^{\alpha\gamma}L_\gamma=0.
\end{align}

\begin{pr}\label{prop43}
 The Berwald functions and constraints for the Finsler metric of type 
(\ref{HessianLsimple}) are
 \begin{align}
  &G^\alpha=
  -\frac12\tilde{L}^{\alpha\gamma}L_\beta \theta^\beta
  \bigl([e_b,e_\gamma]\bigr)\theta^b,
  \quad
  G^I=\lambda^I,
  \label{simpleformula}
  \\
  &{\cal C}_I=L_\beta \theta^\beta \bigl([e_b,e_I]\bigr)\theta^b=0.
 \label{simpleconst}
 \end{align}
\end{pr}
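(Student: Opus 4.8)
I would obtain Proposition~\ref{prop43} as the specialization of Proposition~\ref{formula} to the degenerate Hessian (\ref{HessianLsimple})---equivalently, and this is the route I would actually write out, I would verify the stated formulae (\ref{simpleformula})--(\ref{simpleconst}) directly, using only the two identities already recorded for $\tilde L^{\alpha\beta}$, namely $\tilde L^{\alpha\gamma}L_{\gamma\beta}=\delta^\alpha_\beta-\theta^\alpha L_\beta/L$ and $\tilde L^{\alpha\gamma}L_\gamma=0$. Since (\ref{GtoN}) makes the Cartan condition (\ref{Cartan2}) automatic for any choice of $G^a$, the only thing to check is the metricity condition (\ref{metricity2}); exactly as in the proof of Proposition~\ref{formula}, for $L=L(\theta)$ this is equivalent to $L_a\,\partial G^a/\partial\theta^c=\frac12 L_a\theta^a\bigl([e_b,e_c]\bigr)\theta^b$ for every $c$ (which already forces $L_aG^a=0$).

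The block structure (\ref{HessianLsimple}) gives $L_I=0$ and $L_{I\alpha}=L_{IJ}=0$, so I would first split the frame index as $a=(\alpha,I)$ with $\alpha=(0,\bm a)$. The zero-eigenvectors ${v_I}^a$ of $L_{ab}$ orthogonal to $L_a$ may be taken to be the coordinate directions ${v_I}^a=\delta^a_I$---both requirements in (\ref{l_I^a}) then reduce to $L_{aI}=0$ and $L_I=0$---so the $\lambda^I{v_I}^a$ term of (\ref{Berwald-formula}) affects only the $I$-components, and, absorbing the remaining piece $\propto\theta^I$ into the (here undetermined, homogeneity-two) multipliers, $G^I=\lambda^I$. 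For the $\alpha$-components I would substitute the ansatz (\ref{simpleformula}): $L_aG^a=L_\alpha G^\alpha=-\frac12 L_\alpha\tilde L^{\alpha\gamma}L_\beta\theta^\beta\bigl([e_b,e_\gamma]\bigr)\theta^b=0$ follows from $\tilde L^{\alpha\gamma}L_\gamma=0$ and the manifest symmetry of (\ref{Ltilde}); and, differentiating $L_\alpha\tilde L^{\alpha\gamma}=0$ in $\theta^c$, the left side of the metricity equation becomes $\frac12 L_{\alpha c}\tilde L^{\alpha\gamma}L_\beta\theta^\beta\bigl([e_b,e_\gamma]\bigr)\theta^b$. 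For $c=I$ this vanishes because $L_{\alpha I}=0$, so the $c=I$ equations reduce to $L_\beta\theta^\beta\bigl([e_b,e_I]\bigr)\theta^b=0$, which is (\ref{simpleconst}); for $c=\beta\in(0,\bm b)$, the identity $\tilde L^{\gamma\alpha}L_{\alpha\beta}=\delta^\gamma_\beta-\theta^\gamma L_\beta/L$ turns it into the required right side $\frac12 L_a\theta^a\bigl([e_b,e_\beta]\bigr)\theta^b$ up to a remainder proportional to $\theta^\gamma L_\delta\theta^\delta\bigl([e_b,e_\gamma]\bigr)\theta^b$.

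The step I expect to be the main obstacle is disposing of that last remainder (and, throughout, keeping straight which summation runs over which index range once the $I$-sector is detached). A short antisymmetry argument---the structure functions $\theta^a\bigl([e_b,e_c]\bigr)$ are antisymmetric in $b,c$ whereas $\theta^b\theta^c$ is symmetric---identifies the remainder with a multiple of $\sum_I\theta^I{\cal C}_I$, which vanishes on the constraint surface; this is the analogue of the $c=0$ consistency check at the end of the proof of Proposition~\ref{formula}, now spread over the whole $\alpha$-block because $\tilde L^{\alpha\beta}$ is only a pseudo-inverse of $L_{\alpha\beta}$ (the genuine inverse does not exist, $L_{\alpha\beta}\theta^\beta=0$ by homogeneity plus $L_{\alpha I}=0$) and because the index $0$ now sits inside that block. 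The rest---verifying that the explicit matrix (\ref{Ltilde}) satisfies the two $\tilde L$-identities, and, for the alternative specialization route, that $\ell_{\bm b}{}^\alpha L^{\bm b\bm c}$ summed over the $(n-D)$-block reorganizes into the $\tilde L^{\alpha\gamma}$-contraction over $\gamma=(0,\bm c)$ of (\ref{simpleformula}) (including the new $\gamma=0$ term)---is routine, if lengthy, linear algebra.
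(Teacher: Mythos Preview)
Your proposal is correct, and it follows a genuinely different route from the paper's own proof.

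The paper does \emph{not} re-verify metricity. Instead it takes the explicit block matrix $(\tilde L^{\alpha\beta})$ of (\ref{Ltilde}), writes $M_\gamma:=-\tfrac12 L_\beta\theta^\beta\bigl([e_b,e_\gamma]\bigr)\theta^b$, and by brute-force expansion of $\tilde L^{\bm a\gamma}M_\gamma$ and $\tilde L^{0\gamma}M_\gamma$ (using $\theta^\gamma M_\gamma=0$) shows that each collapses to $\bigl(\delta^{\alpha}_{\bm b}-L_{\bm b}\theta^{\alpha}/L\bigr)L^{\bm b\bm c}M_{\bm c}$, i.e.\ literally the expression of Proposition~\ref{formula}. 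Correctness of (\ref{simpleformula}) is then inherited from Proposition~\ref{formula}; the $\tilde L$-identities are never invoked in the argument. By contrast, you bypass Proposition~\ref{formula} entirely and check the metricity condition $L_a\partial G^a/\partial\theta^c=\tfrac12 L_a\theta^a\bigl([e_b,e_c]\bigr)\theta^b$ directly, using only the two abstract identities $\tilde L^{\alpha\gamma}L_\gamma=0$ and $\tilde L^{\alpha\gamma}L_{\gamma\beta}=\delta^\alpha_\beta-\theta^\alpha L_\beta/L$. This is cleaner---no entry-by-entry matrix work---but it forces you to deal with the remainder $\propto\theta^\gamma L_\delta\theta^\delta\bigl([e_b,e_\gamma]\bigr)\theta^b$, which you correctly identify (after the obvious antisymmetry cancellation on the $\alpha$-block) with $-\theta^I\mathcal C_I$; this is indeed the exact analogue, now spread over all $c=\beta$, of the $c=0$ consistency check at the end of the proof of Proposition~\ref{formula}. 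Your handling of $G^I$ by absorbing the leftover $\propto\theta^I$ piece into the undetermined $\lambda^I$ is legitimate and is essentially what the paper does tacitly in its one-line ``Using (\ref{simpleconst}), we get the above formula for $G^I$.''
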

\begin{proof}
In this case, we can take 
$\displaystyle {v_I}^a=\delta_I^a-\frac{L_I \theta^a}{L}=\delta_I^a$.
Since $\theta^\gamma M_\gamma=\theta^0 M_0+\theta^{\bm{c}}M_{\bm{c}}=0$
for $M_\gamma:=-\frac12 L_\beta \theta^\beta \bigl([e_b,e_\gamma]\bigr)\theta^b$
and
\begin{align}
 \tilde{L}^{\bm{a}\gamma}M_\gamma&=-\frac{\theta^0}{L}\left(L^{\bm{a}\bm{c}}L_{\bm{c}}
 -\frac{L^{\bm{b}\bm{c}}L_{\bm{b}}L_{\bm{c}}\theta^{\bm{a}}}{L}\right)M_0
 +L^{\bm{a}\bm{c}}M_{\bm{c}}-\frac{L^{\bm{a}\bm{b}}L_{\bm{b}}\theta^{\bm{c}}}{L}M_{\bm{c}}
 \nonumber \\
 &\hspace{12pt}
   -\frac{L^{\bm{c}\bm{b}} L_{\bm{b}} \theta^{\bm{a}} }{L} M_{\bm{c}}
   +\frac{L^{\bm{c}\bm{d}} L_{\bm{c}} L_{\bm{d}} \theta^{\bm{a}} \theta^{\bm{c}}}{L^2}M_{\bm{c}}
 \nonumber 
 \\
 &=\frac{1}{L} \left( L^{\bm{a}\bm{c}} L_{\bm{c}}
 -\frac{L^{\bm{b}\bm{c}} L_{\bm{b}} L_{\bm{c}} \theta^{\bm{a}} }{L}
 \right) \theta^{\bm{c}} M_{\bm{c}}
 +L^{\bm{a}\bm{c}}M_{\bm{c}}-\frac{L^{\bm{a}\bm{b}}L_{\bm{b}}\theta^{\bm{c}}}{L}M_{\bm{c}}
 \nonumber \\
 &\hspace{12pt}
   -\frac{L^{\bm{c}\bm{b}} L_{\bm{b}} \theta^{\bm{a}} }{L} M_{\bm{c}}
   +\frac{L^{\bm{c}\bm{d}} L_{\bm{c}} L_{\bm{d}} \theta^{\bm{a}} \theta^{\bm{c}}}{L^2}M_{\bm{c}}
 \nonumber 
 \\
 &=L^{\bm{a}\bm{c}}M_{\bm{c}}-\frac{L^{\bm{c}\bm{b}} L_{\bm{b}} \theta^{\bm{a}} }{L} M_{\bm{c}}
  =\left(\delta^{\bm{a}}_{\bm{b}}-\frac{L_{\bm{b}}\theta^{\bm{a}}}{L}\right)
   L^{\bm{b}\bm{c}}M_{\bm{c}},
\end{align}
$G^{\bm{a}}$ geven by (\ref{simpleformula}) is the same as 
the Proposition \ref{formula}.
$G^0=\tilde{L}^{0\gamma}M_\gamma$ is also the same as the Proposition, since
\begin{align}
 \tilde{L}^{0\gamma}M_\gamma
 &=\left(\frac{\theta^0}{L}\right)^2L^{\bm{a}\bm{b}}L_{\bm{a}}L_{\bm{b}}
 M_0-\frac{\theta^0}{L}\left(L^{\bm{c}\bm{a}}L_{\bm{a}}
 -\frac{L^{\bm{a}\bm{b}}L_{\bm{a}}L_{\bm{b}}\theta^{\bm{c}}}{L}\right)M_{\bm{c}}
 \nonumber \\
 &=-\frac{\theta^0}{L}L^{\bm{c}\bm{a}}L_{\bm{a}}M_{\bm{c}}
 =\left(\delta^{0}_{\bm{b}}-\frac{L_{\bm{b}}\theta^{0}}{L}\right)
   L^{\bm{b}\bm{c}}M_{\bm{c}}.
\end{align}
Using (\ref{simpleconst}), we get the above formula for $G^I$.
\end{proof}

\section{Euler-Lagrange equations in moving frame}

Euler-Lagrange equations are derived from the variational principle of the action,
\begin{align}
 {\cal A}[\bm{c}]:=\int_{\bm{c}}L\bigl(\theta^a\bigr)
 =\int_{s_0}^{s_1}L\left({e^a}_\mu(x(s)) \frac{dx^\mu(s)}{ds}\right)ds,
\end{align}
where $\bm{c}$ means an oriented curve on Finsler manifold $(M,L)$, and
$c(s):[s_0,s_1] \to \bm{c} \subset M$ is an arbitrary parametrization of $\bm{c}$.
The variation of ${\cal A}[\bm{c}]$ is defined by
\begin{align}
 \delta {\cal A}[\bm{c}]:=\int_{\bm{c}} \delta L
 =\int_{\bm{c}} \bib{L}{\theta^a}\delta \theta^a
 =\int_{\bm{c}} \bib{L}{\theta^a}
  \left(\bib{{e^a}_\mu}{x^\nu}\delta x^\nu dx^\mu
  +{e^a}_\mu d\delta x^\mu\right).
\end{align}
With a vecor field $\displaystyle X:=\delta x^\mu \bib{}{x^\mu}$,
the variation $\delta \theta^a$ can be described as
\begin{align}
 \delta \theta^a=\bib{{e^a}_\mu}{x^\nu}\delta x^\nu dx^\mu
  +{e^a}_\mu d\delta x^\mu
 &=\bib{{e^a}_\mu}{x^\nu}\delta x^\nu dx^\mu-\bib{{e^a}_\mu}{x^\nu}dx^\nu \delta x^\mu
 +d{e^a}_\mu \delta x^\mu+{e^a}_\mu d\delta x^\mu
 \nonumber \\
 &=\iota_X {\rm rot \,} \theta^a+ {\rm grad \,} \iota_X \theta^a 
 ={\cal L}_X \theta^a,
\end{align}
where ${\rm grad}$ and ${\rm rot}$ represent the exterior derivatives
on 1-forms and 2-forms, respectively.
For a vector field $X=X^a e_a$ with a 
moving frame $\bigl(e_a\bigr)$
and an arbitrary coefficients $X^a:={e^a}_\mu \delta x^\mu$, we have
\begin{align}
 \delta \theta^a={\cal L}_X \theta^a
 &={\rm grad \,} \iota_X \theta^a+\iota_X {\rm rot \, }\theta^a
 ={\rm grad \,}X^a-\iota_X \left\{
  \frac12 \theta^a\bigl([e_b,e_c]\bigr)\theta^b \wedge \theta^c
 \right\} \nonumber \\
 &=dX^a -  X^b \theta^a \bigl([e_b,e_c]\bigr) \theta^c,
\end{align}
and accordingly,
\begin{align}
 \delta L = \bib{L}{\theta^a} \delta \theta^a 
 =d \left( \bib{L}{\theta^a} X^a \right)  
 -X^a \left\{ d\left(\bib{L}{\theta^a}\right) 
 +\bib{L}{\theta^b} \theta^b\bigl([e_a,e_c]\bigr) \theta^c 
 \right\}.
\end{align}
The variational principle $\delta {\cal A}[\bm{c}]=0$
leads to the following Euler-Lagrange equations in moving frame formulation,
\begin{align}
 0=c^\ast \bigl\{ dL_a 
 +L_b \, \theta^b\bigl([e_a,e_c]\bigr) \theta^c
 \bigr\}.\label{ELeq}
\end{align}
The equations above determine the extremum curve $\bm{c}$ as a solution.
With the Cartan condition (\ref{Cartan2}) of the 
nonlinear Finsler connection,
the terms in the bracket of the right-hand side 
of the Euler-Lagrange equations (\ref{ELeq}) become
\begin{align}
 0&=dL_a
    +L_b
    \left( \bib{{N^b}_a}{\theta^c}-\bib{{N^b}_c}{\theta^a}\right)\theta^c.
 \label{ELeq2}
\end{align}
Because of the fact that ${N^b}_a$ are homogeneous functions of degree one
with respect to $\theta$:
\begin{align}
 \bib{{N^b}_a}{\theta^c}\theta^c={N^b}_a,
\end{align}
and the metricity condition of the Finsler nonlinear connection (\ref{metricity2}):
\begin{align}
 0=L_b {N^b}_a,\label{metricity3}
\end{align}
the second term of (\ref{ELeq2}) vanishes.
Moreover, differentiating the condition (\ref{metricity3}) with respect to $\theta$,
we obtain
\begin{align}
 0=L_{ab}{N^b}_c+L_b\bib{{N^b}_c}{\theta^a}.
\end{align}
With the aid of the above relation and the definition of the derivative
\begin{align}
 \displaystyle d\left(\bib{L}{\theta^a}\right)=\bbib{L}{\theta^b}{\theta^a}d\theta^b,
\end{align}
the equations (\ref{ELeq2}) finally become
\begin{align}
 0=L_{ab}\bigl(d\theta^b+{N^b}_c \theta^c\bigr).
 \label{LabDtheta}
\end{align}
Therefore the Euler-Lagrange equations (\ref{ELeq}) are rewritten into
the form of the auto-parallel equations,
\begin{align}
 \left\{  
 \begin{array}{l}
 \medskip
 \displaystyle
 c^\ast \left\{ d\theta^a + 2G^a-\lambda^0 \frac{\theta^a}{L} \right\}=0, 
 \quad  G^a=\frac12 \theta^c {N^a}_c, \quad (a=0,1,2,\dots,n) \\
 \displaystyle
 c^\ast {\cal C}_I =c^\ast \left\{
 L_{I{\bm a}}\lambda^{\bm a}
 +\frac{1}{2} L_d \, \theta^d \bigl([e_b,e_I]\bigr) \theta^b\right\}=0,
 \quad  (I=n-D+1, n-D+2,\dots,n)
 \end{array}  
 \right. \label{ELeqauto}
\end{align}
where $\lambda^0$ is an arbitrary function of $(x^\mu,\theta^a)$
with degree two homogeneity with respect to $\theta^a$.
The presence of this arbitrary function $\lambda^0$ in the auto-parallel equations
represents their parameterization invariance.
The equations (\ref{LabDtheta}) tell that 
$d\theta^a+2G^a$ should be spaned by eigenvectors
for zero eigenvalue of $\displaystyle \bigl(L_{ab}\bigr)$,
$\displaystyle \frac{\theta^a}{L}$ and ${v_I}^a$.
$G^a$, however, already include $\lambda^I {v_I}^a$ with arbitrary $\lambda^I$ in itself,
which leads to the first equation of (\ref{ELeqauto}).
The second line of (\ref{ELeqauto}) represents constraints.
These constraints can be classified into two categories:
the case that $\lambda^I$ is fixed so that it is consistent with
the derivative of ${\cal C}_I$ (second class constraints),
or the case that $\lambda^I$ remains arbitrary (first class constraints).

\section{Examples}

We show two
applications in moving frame formalism: 
i) Riemannian manifold, 
and ii) pseudo particle model~\cite{OITY}. 

\subsection{Riemannian manifold}

It is a good exercise to go over the method
in the case of a Riemannian manifold $(M,g)$
as a Finsler manifold $(M,L)$.
The Finsler metric $L$ for a
Riemannian metric $g=g_{\mu\nu}(x)dx^\mu \otimes dx^\nu$
in terms of a moving frame $\theta^a$ is given by
\begin{align}
 L=\sqrt{g_{\mu\nu}(x)dx^\mu dx^\nu}=\sqrt{\eta_{ab} \theta^a \theta^b}.
 \label{Riemann-metric}
\end{align}
We already know the solution: the nonlinear Finsler connection 
\begin{align}
 \nabla \theta^a = -\theta^c \otimes {N^a}_c,
\end{align}
in this case is linear,
${N^a}_c={\omega^a}_{bc} \theta^b$, and 
the metricity condition is just $\omega_{abc}+\omega_{bac}=0$.
Here we rederive the result through 
our formula (\ref{simpleformula}) in Proposition \ref{prop43}
for the Berwald functions $G^a$.
We have
\begin{align}
 &L_a = \bib{L}{\theta^a}=\frac{\eta_{ab}\theta^b}{L},
 \quad 
 L_{ab}=\bbib{L}{\theta^a}{\theta^b}
 =\frac{1}{L}\left( \eta_{ab} - \frac{\eta_{ac}\theta^c\ \eta_{bd}\theta^d}{L^2}\right),
 \nonumber \\ 
 &\tilde{L}^{ab}
 = L \eta^{ab}-\frac{\theta^a \theta^b}{L}
 \quad  \tilde{L}^{ac} L_{cb}=\delta^a_b
 -\frac{\theta^a L_b}{L}, \quad \tilde{L}^{ac}L_c=0,
\end{align}
therefore 
the Berwald functions become
\begin{align}
G^a &= -\frac12 \tilde{L}^{ac}L_f \,\theta^f \bigl( 
 [e_b,e_c]\bigr) \theta^b
 = -\frac12 \eta^{ae} \eta_{fd} \theta^f \bigl([e_b,e_e]\bigr) \theta^b \theta^d,
 \label{Riemannian-B}
\end{align}
without constraints.
From (\ref{Riemannian-B}) and (\ref{GtoN}),
we obtain a linear connection
\begin{align}
 {N^a}_c &= \bib{G^a}{\theta^c}-\frac12 \theta^a\bigl([e_b,e_c]\bigr) \theta^b\nonumber\\
 &=-\frac12 \bigl\{
   \eta^{ae} \eta_{fc} \theta^f \bigl([e_b,e_e]\bigr) \theta^b 
  +\eta^{ae} \eta_{fd} \theta^f \bigl([e_c,e_e]\bigr) \theta^d 
  +\theta^a \bigl([e_b,e_c]\bigr) \theta^b
  \bigr\}=:{\omega^a}_c,       
\end{align} 
which is indeed the spin connection ${\omega^a}_c={\omega^a}_{bc}\theta^b$
on the Riemannian manifold.

The Euler-Lagrange equation of a free particle  
on the Riemannian manifold $(M,L)$ described by the Finsler metric
(\ref{Riemann-metric}) is 
\begin{align}
 0 &=c^\ast \left\{
 d\left(\frac{\theta_a}{L}\right)+\frac{1}{L}\theta^b \bigl([e_a,e_c]\bigr)
 \theta_b \theta^c \right\}, \quad 
 L_a=\frac{\theta_a}{L}, \quad
 \theta_a:=\eta_{ab}\theta^b.
\end{align}
This is 
written in the form of the auto-parallel equation
\begin{align}
 0=c^\ast \bigl\{
 d\theta^a+{\omega^a}_c \theta^c-\lambda \theta^a
 \bigr\},
\end{align}
with an arbitrary function $\lambda$ of $x$ and $dx$ which is a Lagrange multiplier
representing reparameterization invariant.

\subsection{Pseudoclassical particle}

The next
example is a 
generalization of the pseudoclassical point particle
model of Casalbuoni, Brink and Schwarz~\cite{Casalbuoni,Brink-Schwarz,Freund}.
We previously considered the model on a curved 
two dimensional spacetime 
in terms of super Finsler geometry~\cite{OITYsuper}.
In this section, we will recast a super Finsler connection of the model
in moving frame.
The Finsler metric of the model which we call 
Casalbuoni-Brink-Schwarz metric (CBS metric) is given by
\begin{align}
 L=\sqrt{g_{\mu\nu}(x)\Pi^\mu \Pi^\nu}, \quad
 \Pi^\mu=dx^\mu+\langle \xi|\gamma^\mu(x)|d\xi\rangle,
\end{align}
where $g_{\mu\nu}(x)$ is a Lorentz metric, 
$(x^\mu)=(x^0,x^1)$ are Gra{\ss}mann even coordinates,
$|\xi\rangle=|_A\rangle \xi^A=|_1\rangle \xi^1+|_2\rangle \xi^2$
is a 2-dimensional Majorana spinor, and
$(\xi^A)=(\xi^1,\xi^2)$ are Gra{\ss}mann odd coordinates,
$\gamma^\mu(x)={e_a}^{\mu}(x)\gamma^a$,
$(\gamma^a)=(\gamma^0,\gamma^1)$ are Dirac matrices:
$\gamma^a \gamma^b+\gamma^b \gamma^a=2\eta^{ab}$,
${e_a}^\mu(x)$ are zweibeins: $\eta^{ab}{e_a}^\mu(x){e_b}^\nu(x)=g^{\mu\nu}(x)$, 
and $\langle \,,\, \rangle$ is a metric of the spinor space:
$\langle_A|{}_B\rangle:=\delta_{AB}$.
We consider the CBS metric $L$ as a super Finsler metric on 
a supermanifold $M^{(2,2)}$ with even and odd coordinates 
$(x^\mu,\xi^A)=:(z^I)$;
\begin{align}
 L=L(z,dz):\bm{v} \in T_pM^{(2,2)} \mapsto 
 L\bigl(z(p),dz(\bm{v})\bigr) \in \mathbb{R}_c, 
\end{align}
where $\mathbb{R}_c=\mathbb{R}_{S[2\,0]}$
is a set of even real numbers included in
a Gra{\ss}mann algebra over $\mathbb{R}$ with two odd generaters
$\Lambda_2=\mathbb{R}_{S[2]}$~\cite{DeWitt, Rogers, Freund}.

We define a super moving coframe and frame 
$\Theta^\Psi=\bigl(\Theta^a, \Theta^A \bigr), \, E_\Psi=\bigl(E_a,E_A\bigr)$:
\begin{align}
 &\Theta^a:={e^a}_\mu dx^\mu+\langle \xi|\gamma^a|d\xi\rangle, \quad
 E_a:={e_a}^\mu \bib{}{x^\mu}, \quad (a=1,2), \\
 &\Theta^A:=d\xi^A, \quad 
 E_A:=\bib{}{\xi^A}
 -\langle \xi|\gamma^a|_A \rangle  {e_a}^\mu  \bib{}{x^\mu},
 \quad (A=1,2),
\end{align}
then the CBS metric is written by
\begin{align}
 L=L(\Theta^a)=\sqrt{\eta_{ab}\Theta^a \Theta^b}.
\end{align}
The super Finsler connection $N^\Psi=\bigl({N^a}, {N^A}\bigr)$ is defined by
\begin{align}
 &\nabla \Theta^\Psi=-\Theta^\Omega \otimes {N^\Psi}_\Omega, \\
 &
 \bib{{N^\Psi}_\Omega}{\Theta^\Phi}
  -(-1)^{|\Omega||\Phi|}\bib{{N^\Psi}_\Phi}{\Theta^\Omega}
  +\Theta^\Psi \bigl([E_\Phi,E_\Omega]\bigr)=0,
 \\
 &\nabla L=\nabla \Theta^a \bib{L}{\Theta^a}
 =-\Theta^\Omega \otimes {N^a}_\Omega \bib{L}{\Theta^a}
 =0,
\end{align}
where index $\Psi, \Phi, \Omega$ takes $(a,A)$, and 
${N^\Psi}_\Omega$ are functions of 
$(x^\mu,\xi^A,\Theta^a,\Theta^A)$ and homogeneity one with respect to 
$(\Theta^a, \Theta^A)$,
that is 
$\displaystyle {N^\Psi}_{\Omega}=\bib{{N^\Psi}_{\Omega}}{\Theta^\Phi}\Theta^\Phi$.

Since $L_A=0$,
the Berwald functions and the constraints for this metric become
\begin{align}
 &G^a=-\frac12 \tilde{L}^{ac}L_b \, \Theta^b \bigl([E_\Phi,E_c]\bigr)\Theta^\Phi,\quad
 G^A=\lambda^A, \\
 &{\cal C}_A=L_b \, \Theta^b \bigl([E_A,E_\Phi]\bigr)\Theta^\Phi=0,
\end{align}
from the Proposition \ref{prop43}.
$\lambda^A$ are arbitrary functions of $\bigl(z,\Theta\bigr)$ 
and homogeneous two with respect to $\Theta^\Psi$, and
\begin{align}
 & L_a=\frac{\Theta_a}{L}, \quad L_A=0, \quad
 \bigl(L_{\Psi\Phi}\bigr)=
 \left(
  \begin{array}{cc}
  \displaystyle
  \frac{\eta_{ab}}{L}-\frac{\Theta_a \Theta_b}{L^3} & 0 \\
  0 & O
  \end{array}
 \right), \quad
 \tilde{L}^{ab}=L\eta^{ab}-\frac{\Theta^a \Theta^b}{L},
 \\
 &[E_A,E_b]=-[E_a,E_b]\langle\xi|\gamma^a|_A\rangle,
 \\
 &[E_A,E_B]=-2E_a \langle_B|\gamma^a|_A\rangle+[E_a,E_b]\langle \xi|\gamma^a|_A\rangle
  \langle \xi|\gamma^b|_B\rangle,
\end{align}
where $E_a$ are usual zweibeins on the 2-dimensional Lorentzian manifold.
Furthermore
\begin{align}
 \Theta^b \bigl([E_\Phi,E_c]\bigr)\Theta^\Phi
 &=\Theta^b \bigl([E_d,E_c]\bigr)\Theta^d
 +\Theta^b \bigl([E_D,E_c]\bigr)\Theta^D
 \nonumber 
 \\
 &=\theta^b \bigl([E_d,E_c]\bigr) \Theta^d
 -\theta^b\bigl([E_d,E_c]\bigr) \langle \xi|\gamma^d|_D\rangle \Theta^D
 \nonumber 
 \\
 &=\theta^b\bigl([E_d,E_c]\bigr)\theta^d
 +\theta^b\bigl([E_d,E_c]\bigr)\langle \xi|\gamma^d|d\xi\rangle
 -\theta^b\bigl([E_d,E_c]\bigr) \langle \xi|\gamma^d|d\xi\rangle
 \nonumber 
 \\
 &=\theta^b\bigl([E_d,E_c]\bigr)\theta^d,
 \\
 \Theta^b \bigl([E_A,E_\Phi]\bigr)\Theta^\Phi
 &=\Theta^b \bigl([E_A,E_c]\bigr)\Theta^c
 +\Theta^b \bigl([E_A,E_C]\bigr)\Theta^C
 \nonumber
 \\
 &=-\theta^b\bigl([E_d,E_c]\bigr)\langle \xi|\gamma^d|_A\rangle \Theta^c
 -2 \langle_A|\gamma^b|_C\rangle \Theta^C
 \nonumber \\
 & \hspace{24pt}
 +\theta^b\bigl([E_d,E_c]\bigr)\langle \xi|\gamma^d|_A\rangle
  \langle \xi|\gamma^c|_C\rangle \Theta^C,
 \nonumber 
 \\
 &=-\theta^b\bigl([E_d,E_c]\bigr)
 \langle \xi|\gamma^d|_A\rangle \theta^c
 -\theta^b\bigl([E_d,E_c]\bigr)\langle \xi|\gamma^d|_A\rangle
 \langle \xi|\gamma^c|d\xi \rangle
 \nonumber \\
 & \hspace{24pt}
 -2\langle _A |\gamma^b|d\xi\rangle
 +\theta^b\bigl([E_d,E_c]\bigr)\langle \xi|\gamma^d|_A\rangle
  \langle \xi|\gamma^c| d\xi \rangle,
 \nonumber
 \\
 &=
 -\theta^b\bigl([E_d,E_c]\bigr) \langle \xi|\gamma^d|_A\rangle \theta^c
 -2\langle _A |\gamma^b|d\xi\rangle, 
\end{align}
where we take $\theta^a:={e^a}_\mu dx^\mu$.
Therefore we get explicit forms of $G^a$ and ${\cal C}_A$, 
\begin{align}
 G^a&=-\frac12 \tilde{L}^{ac}L_b \, \theta^b\bigl([E_d,E_c]\bigr)\theta^d
 =-\frac12 \eta^{ac}\Theta_b \theta^b\bigl([E_d,E_c]\bigr)\theta^d
 +\frac12 \frac{\Theta^a \Theta^c}{L^2}\Theta_b \theta^b\bigl([E_d,E_c]\bigr)\theta^d
 \nonumber 
 \\
 &=
 -\frac12 \eta^{ac}\theta_b \theta^b\bigl([E_d,E_c]\bigr)\theta^d
 -\frac12 \eta^{ac}\theta^b\bigl([E_d,E_c]\bigr)\theta^d
  \langle \xi|\gamma_b|d\xi\rangle
 \nonumber 
 \\
 & \hspace{24pt}
 +\frac12 \frac{\Theta^a \Theta_b}{L^2} \theta^b\bigl([E_d,E_c]\bigr)\theta^d
 \langle \xi|\gamma^c|d\xi\rangle,
 \\
 {\cal C}_A&=\langle_A |\left\{
 -\frac{\Theta_b}{L}\theta^b\bigl([E_d,E_c]\bigr)\theta^c\gamma^d|\xi\rangle
 -\frac{2\Theta_b}{L}\gamma^b|d\xi\rangle
 \right\}=0.
\end{align}
If we have another vielbein basis: 
$dz^I=\bigl(dx^\mu, d\xi^A\bigr)$,
the natural coordinates basis,
their nonlinear connections coefficients
${n^I}_K$ with respect to this basis, are defined by
\begin{align}
 -dz^K \otimes {n^\mu}_K &=\nabla dx^\mu
 =\nabla \bigl\{ {e_a}^\mu \bigl(\Theta^a - \langle \xi|\gamma^a|\Theta\rangle \bigr)\bigr\} 
 \nonumber 
 \\
 &=d{e_a}^\mu \otimes \bigl(\Theta^a-\langle \xi|\gamma^a|\Theta\rangle\bigr)
 -\Theta^\Omega \otimes {e_a}^\mu {N^a}_\Omega
 -d\xi^A \otimes \langle_A|\gamma^\mu|\Theta\rangle
 \nonumber \\
 & \qquad
 +(-1)^{|\Omega|}\Theta^\Omega \otimes \langle \xi|\gamma^\mu|N_\Omega\rangle
 \\
 -dz^K \otimes {n^A}_K &= \nabla d\xi^A
 =\nabla \Theta^A
 =-\Theta^\Omega \otimes {N^A}_\Omega,
\end{align}
so that we have 
\begin{align}
 {n^\mu}_\rho &=-\bib{{e_a}^\mu}{x^\rho}\theta^a
 +{e^c}_\rho {e_a}^\mu {N^a}_c
 -{e^c}_\rho \langle \xi|\gamma^\mu|N_c\rangle,
 \\
 {n^\mu}_C&=-\langle \xi|\gamma^c|_C\rangle {e_a}^\mu {N^a}_c
 +{e_a}^\mu {N^a}_C+\langle_C|\gamma^\mu|\Theta\rangle
 +\langle \xi|\gamma^c|_C\rangle \langle \xi|\gamma^\mu|N_c\rangle
 +\langle \xi|\gamma^\mu|N_C\rangle,
 \\
 {n^A}_\rho&={e^c}_\rho {N^A}_c, \\
 {n^A}_C&=-\langle \xi|\gamma^c|_C\rangle {N^A}_c 
 +{N^A}_C.
\end{align}
Then we obtain
\begin{align}
 2g^\mu&:=dz^K {n^\mu}_K 
 =-d{e_a}^\mu \theta^a+\theta^c {e_a}^\mu {N^a}_c
 -\theta^c \langle \xi|\gamma^\mu|N_c\rangle
 +\langle \xi|\gamma^c|d\xi\rangle{e_a}^\mu {N^a}_c
 \nonumber 
 \\
 & \quad
 +{e_a}^\mu d\xi^C {N^a}_C+\langle d\xi|\gamma^\mu|\Theta\rangle
 -\langle \xi|\gamma^c|d\xi\rangle \langle \xi|\gamma^\mu|N_c\rangle
 -\langle \xi|\gamma^\mu|d\xi^C N_C\rangle
 \nonumber 
 \\
 &=-d{e_a}^\mu \theta^a+\langle \Theta|\gamma^\mu|\Theta\rangle+
 {e_a}^\mu \bigl(\Theta^c{N^a}_c+\Theta^C {N^a}_C
 \bigr)-\langle \xi|\gamma^\mu|\bigl(\Theta^c N_c+\Theta^CN_C \bigr)\rangle
 \\
 &=-d{e_a}^\mu \theta^a 
 +2{e_a}^\mu G^a-2\langle \xi|\gamma^\mu|G\rangle,
 \\
 2g^A&:=dz^K {n^A}_K=\theta^c{N^A}_c 
 +\langle \xi|\gamma^c| d\xi \rangle {N^A}_c
 +d\xi^C {N^A}_C
 \nonumber 
 \\
 &=\Theta^c {N^A}_c+\Theta^C {N^A}_C =2G^A.
\end{align}
The identities
\begin{align}
 &\langle d\xi|{\cal C}\rangle
 =-\frac{\Theta_b}{L} \theta^b\bigl([E_d,E_c]\bigr)\theta^c
 \langle d\xi|\gamma^d|\xi\rangle, 
 \\
 &{e_a}^\mu \eta^{ac}\theta^b \bigl([E_d,E_c]\bigr)\theta^d
 =g^{\mu\rho}{e^c}_\rho \theta^b \bigl([E_d,E_c]\bigr)\theta^d
 =g^{\mu\rho} \iota_{\bib{}{x^\rho}} \left\{-\frac12
 \theta^b \bigl([E_d,E_c]\bigr)\theta^d \wedge \theta^c
 \right\}, \notag \\
 & \hspace{36mm}
 = g^{\mu\rho} \iota_{\bib{}{x^\rho}} {\rm div} \theta^b
 \\
 &d{e_a}^\mu \theta^a+{e_a}^\mu \eta^{ac}\theta_b \theta^b\bigl([E_d,E_c]\bigr)\theta^d
 = - \Gamma^\mu {}_{\nu \rho} dx^\nu dx^\rho,
\end{align}
leads to
\begin{align}
 g^\mu&=-\frac12 d{e_a}^\mu \theta^a
 +{e_a}^\mu G^a-\langle \xi|\gamma^\mu|G\rangle
 \nonumber 
 \\
 &=-\frac12 d{e_a}^\mu \theta^a
 -\frac12 {e_a}^\mu \eta^{ac} \theta_b \theta^b\bigl([e_d,e_c]\bigr)\theta^d
  -\frac12 {e_a}^\mu \eta^{ac} \theta^b\bigl([e_d,e_c]\bigr)\theta^d \langle \xi|\gamma_b|d\xi\rangle
 \nonumber 
 \\
 & \quad
  +\frac{\Theta^a\Theta_b}{2L^2} {e_a}^\mu \theta^b\bigl([e_d,e_c]\bigr)\theta^d
   \langle \xi|\gamma^c|d\xi\rangle
  -\langle\xi|\gamma^\mu|\lambda\rangle,
 \\
 &=\frac12 {\Gamma^\mu}_{\nu\rho}dx^\nu dx^\rho
 +\frac{1}{2L}\Pi^\mu \langle d\xi|{\cal C}\rangle-\frac12 g^{\mu\rho}
 \iota_{\bib{}{x^\rho}}{\rm div}\theta^a \langle \xi|\gamma_a|d\xi\rangle
 -\langle \xi|\gamma^\mu|\lambda\rangle,
 \\
 g^A&=\lambda^A,
\end{align}
where $\Pi^\mu={e_a}^\mu \Theta^a$, 
${\rm div}\theta^a=d{e^a}_\mu \wedge dx^\mu$.
These are indeed the results of Berwald functions described in the
previous paper.
As well as the example A, the motion of the free particle on the Finsler manifold
is given by
\begin{align}
 0=c^\ast \left\{ d\Theta^a+2G^a-\lambda \Theta^a\right\}, 
 \quad 
 0=c^\ast{\cal C}_A.
\end{align}

\section{Discussions}

Calculations of Finsler connection and curvature in holonomic coordinates
are in general complicated and often hinder theoretical progress in physical
applications of Finsler geometry, especially when one tries to understand
a theory which is an extension of general relativity.
It is, however, conceivable that moving frame (vielbein) formalism
makes it easier as in general relativity.
In this paper, we gave the calculation method in moving frame and
displayed how it freely works in two different examples.
We believe the method even helps to simplify the description of
much more intricate theories that appear in physics.


\end{document}